\DeclareMathOperator*{\argmax}{argmax} 
\tikzstyle{block} = [draw, rectangle, minimum size=3em]
\tikzstyle{bigblock} = [draw, rectangle, minimum height=7em, minimum width=11em]
\theoremstyle{remark}
\tikzset{>=latex}
\newtheorem{lemma}{Lemma}
\newtheorem{example}{Example}
\newtheorem{definition}{Definition}
\newlength\myindent
\let\AND  \@undefined
\let\endAND  \@undefined
\begin{document}
\begin{frontmatter}
\title{Constrained Active Classification Using Partially Observable Markov Decision Processes}


\author[UT]{Bo Wu}\ead{bowu86@gmail.com},
\author[UCB]{Niklas~Lauffer}\ead{nlauffer@berkeley.edu}, 
\author[UT]{Suda Bharadwaj}\ead{suda.b@utexas.edu},
\author[CalTech]{Mohamadreza Ahmadi}\ead{mrahmadi@caltech.edu},
\author[ASU]{Zhe~Xu}\ead{xzhe1@asu.edu},    
\author[UT]{Ufuk Topcu}\ead{utopcu@utexas.edu}

\address[UT]{the Department of Aerospace Engineering and Engineering Mechanics, and the Oden Institute for Computational Engineering and Sciences, University of Texas, Austin, Austin, TX 78712} 
\address[UCB]{the Department of Electrical Engineering and
Computer Sciences at the University of California, Berkeley, Berkeley,
CA 94720} 
\address[CalTech]{the Center for Autonomous Systems and Technologies
at the California Institute of Technology, 1200 E. Calif. Blvd., MC 104-
44, Pasadena, CA 91125} 
\address[ASU]{School for Engineering of Matter, Transport and Energy, Arizona State University, Tempe, AZ 85287} 

\maketitle

\begin{abstract}
In this work, we study a problem of actively classifying the attributes of dynamical systems characterized as a  finite set of  Markov decision process (MDP) models.  We are interested in finding strategies that actively interact with the dynamical system and observe its reactions so that the attribute of interest is classified efficiently with high confidence. We present a decision-theoretic framework based on partially observable Markov decision processes (POMDPs). The proposed framework relies on assigning a classification belief (a probability distribution) to the attributes of interest. Given an initial belief, confidence level over which a classification decision can be made, a cost bound, safe belief sets, and a finite time horizon, we compute POMDP strategies leading to classification decisions. We present three different algorithms to compute such strategies. The first algorithm computes the optimal strategy exactly by value iteration. To overcome the computational complexity of computing the exact solutions, we propose a second algorithm based on adaptive sampling and third based on Monte Carlo tree search to approximate the optimal probability of reaching a classification decision. We illustrate the proposed methodology using examples from medical diagnosis, security surveillance, and wildlife classification.
\end{abstract}
\begin{keyword}
	Active classification, POMDP, cost-bounded reachability
\end{keyword}

\end{frontmatter}

\section{Introduction}\label{sec:introduction} 

Active classification \cite{hollinger2017active} is a sequential decision-making process that is inherently ``curious'', i.e., it has control over the data acquisition  and interacts closely with the object of interest. 
It is a plausible approach in many practical classification applications, such as  medical diagnosis \cite{hauskrecht2000planning,ayer2012or} and target tracking \cite{adhikari2018applying}. In these scenarios, the object of interest has certain underlying dynamics and the classification decision has to be made efficiently with high accuracy.

{As illustrated in Figure~\ref{fig:classification}, active classification is  a closed-loop process where the classification process dynamically makes queries, i.e., selects different actions that may directly affect the state evolution of the object. It then obtains observations to update its confidence of correct classification decision. A classification decision can be made when the error probability meets a given threshold. Otherwise, additional queries (actions) will be scheduled to collect more evidence (observations) to assist the classification process.}

\begin{figure}[t]
	\centering

		\begin{tikzpicture}[shorten >=1pt,node distance=3cm,on grid,auto, bend angle=20, thick,scale=0.65, every node/.style={transform shape}] 
		\node[bigblock,dotted] (s0)   {}; 
		\node [draw, diamond, aspect=1.5]  (s1) [right =   4 cm of s0] {~~~~~~~~~~~~~~~~~~~~~~~~~}; 	
		\node[draw,align=left] (s2) [ right = 4 cm  of s1]  {Classification\\ decision}; 
			\node[draw,align=left] (s3) [ below left = 2 cm and 1.5 cm  of s1]  {Active\\ classification}; 
		

        \node at (0,0){\includegraphics[scale=0.2]{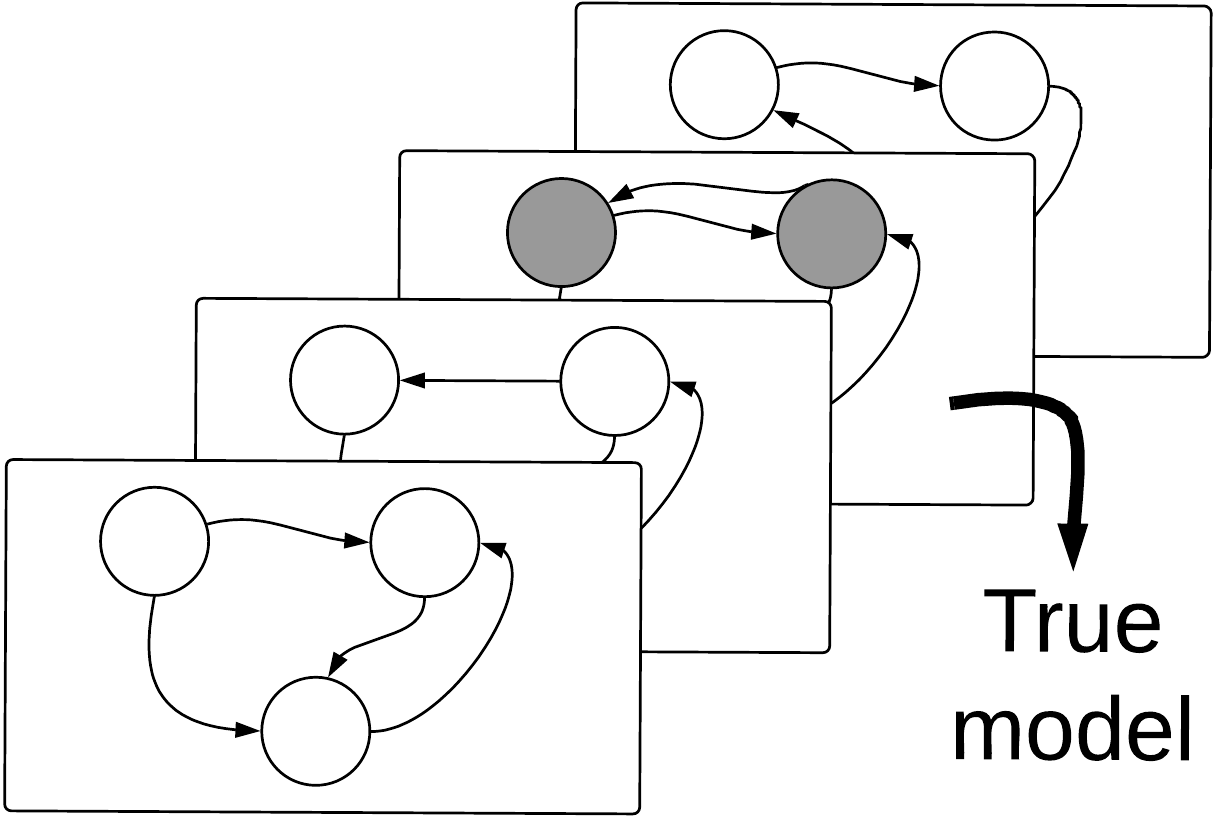}};
			
		\draw [->](0,1.5) -- (0,1.8)--(4,1.8)--(4,1.1);
		\draw [->](4,-1.2)--(4,-2)--(3.5,-2);
		\draw [->](1.5,-2)--(0,-2)--(0,-1.5);
    	\node[text width=3cm] at (-2.2,1){A family $\mathcal{C}$ of \\ candidate models };
    	\node[text width=3cm] at (1.8,-1.8){Action};
    	\node[text width=3cm] at (3,2){Observation};
    	\node[text width=3cm] at (7.5,0.2){Yes};
    	\node[text width=3cm] at (5.7,-1.5){No};
    	\node[text width=3cm] at (0.9,-1.2){({\textbf{1}},1)};
    	\node[text width=3cm] at (1.6,-0.7){({\textbf{1}},2)};
    	\node[text width=3cm] at (2.3,-0.2){({\textbf{2}},1)};
    	\node[text width=3cm] at (2.9,0.3){({\textbf{2}},2)};
    	\node[text width=3cm] at (4.25,0){Misclassification \\ probability met?};
    	\node[text width=3cm] at (-2.5,-1.5){$\mathcal{C}={\mathbf{\mathcal{C}_1}}\times\mathcal{C}_2$\\$\mathcal{C}_1=\{1,2\}$\\$\mathcal{C}_2=\{1,2\}$};
		\path[->]
				(s1) edge node [above] {} (s2)  

		; 
		\end{tikzpicture} 
	
	\caption{Active classification from a family of candidate models. $\mathcal{C}_1$ in bold is the attribute to be classified. }
	\label{fig:classification}
\end{figure}

{In this paper, we study the problem of active classification in which the object whose attributes to be classified is a dynamical system.} In order to  capture the stochastic uncertainties of the outcomes associated with each action during classification, we assume that the underlying dynamic model of the object belongs to a family of known Markov decision processes (MDPs) \cite{puterman2014markov}. The states of the underlying MDP is fully observable but what is not known is which MDP the object corresponds to. {Such an unknown variable may consist of a number of attributes, where some of the attributes are of interest to be classified. For example, as shown in Figure \ref{fig:classification}, there are two attributes $\mathcal{C}_1$ and $\mathcal{C}_2$, each could take two possible values $1$ and $2$. As a result, there are four candidate models. The classification objective is to determine whether $\mathcal{C}_1$ is 1 or 2.} Furthermore, each action may incur a certain cost, for example the cost of test or treatment in medical diagnosis. Therefore,  the overall cost of classification must  be bounded.

Naturally, such an active classification problem  can be cast into the framework of hidden model MDP (HMMDP), which is a special case of partially observable Markov decision process (POMDP) \cite{kaelbling1998planning}. {Since the true underlying model is not known, it is only possible to maintain a \emph{belief} of which model is the true one for making the classification decision. The belief, in the proposed setting, is a probability distribution over the possible MDP models and evolves  based on the history of observations and actions}.

We are interested in providing a guaranteed bound on the classification error. The classification decision is made whenever the probability of  the attribute of interest being the true one exceeds a  given threshold based on the misclassfication probability, as shown in Figure \ref{fig:classification}. Meanwhile, we impose additional requirements to maintain the safety or privacy of the object to be classified, as considered in \cite{ahmadi2018privacy,ahmadi2019control}. A classification decision should be reached without violating them. 

Given the POMDP model, the desired thresholds on the misclassification probability and the accumulated costs, we propose two approaches to obtain the classification strategy that optimizes the probability to reach a classification decision. The exact solution is inspired by policy computation for constrained MDPs \cite{el2018controlled}, especially cost-bounded reachability in MDPs \cite{andova2003discrete}. It relies on obtaining the underlying belief MDP whose states are beliefs from the original POMDP. 
Then the optimal strategy can be computed from the obtained belief MDP. To overcome the computational complexity of finding exact solutions for POMDPs, our second approach  adaptively samples the actions in the belief MDP to approximate the optimal probability to reach a classification decision.

The extensions compared with the preliminary results in \cite{wu2018cost} are as follows. First, instead of only deciding which true model the underlying MDP belongs to in \cite{wu2018cost}, this paper considers classifying an attribute of interest out of several attributes that a model may have. Therefore, the classification objective in \cite{wu2018cost} is a special case of this paper. Second,  this paper considers additional safety constraints during the classification process, whereas \cite{wu2018cost} focused merely on a pure reachability problem in the belief space. Therefore, new and more detailed proofs are presented to show the (approximate) optimality of the probability to reach a classification decision. Third, a new Monte Carlo tree search based approach is proposed for solving the problem on the go while lowering the computation complexity. 

The contribution of this paper is three-fold. To the best of our knowledge, we are the first to consider active classification in an HMMDP modeling framework. Second, besides the classification objective, we also present additional constraints regarding cost bound, safety, and privacy. Third, we establish new algorithms, especially an adaptive sampling and Monte Carlo tree search method, to solve the proposed classification problem with constraints. We also prove the asymptotic convergence of the algorithm's output to the optimal probability to reach a classification decision.
\subsection{Related Work}
POMDPs  have been used in a variety of applications in medical diagnosis \cite{hauskrecht2000planning,ayer2012or}, health care \cite{zois2013energy}, privacy \cite{wu2018privacy, savas2022entropy}, and robotic planning \cite{wu2021supervisor,hollinger2017active,myers2010POMDP,spaan2008cooperative,chen2012constrained}. Many existing results focus on minimizing the costs as well as the classification uncertainty measured in terms of entropy~\cite{fox1998active}. The classification accuracy is often implicitly embedded in the rewards. However, entropy is an uncertainty measure that is not directly translated to a guaranteed classification accuracy  \cite{settles.tr09}. In \cite{spaan2015decision}, a robot sensing and surveillance scenario is considered in a POMDP framework, where the objective is to reach a high certainty level in the sensing task while balancing the surveillance task. However, they did not consider additional cost bound and safety constraints as in this paper.

A similar problem for robot planning was studied recently in \cite{Wang:2018:BPS:3237383.3237424}, where the belief in a POMDP must reach some goal states. Their planning problem is in a constrained belief space in which the reachability to goal states is guaranteed with probability one. However, it could be over-conservative in many cases since reaching goal states could never be guaranteed with probability one and, therefore, that constrained belief space may not even exist.

The computation complexity of finding an optimal policy in a POMDP is prohibitively high, which motivates many approximate solutions. One of the most popular approaches is the Monte Carlo tree search (MCTS) method \cite{coulom2006mcts,browne2012survey}, for example, the Partially Observable Monte Carlo Planning (POMCP) algorithm proposed in \cite{silver2010monte}. We introduce a version of Monte Carlo tree search along with an adaptive sampling approach, an offline MCTS method \cite{fu2018monte}, that can both compute an approximate solution with significantly reduced computation time.



\section{Preliminaries}\label{sec:preliminaries}
In this section, we describe preliminary notions and definitions used in the sequel, following \cite{puterman2014markov,chades2012momdps}.

\subsection{Markov Decision Processes}
 Formally, a Markov decision process (MDP)  is defined as follows.
\begin{definition}
	An MDP $\mathcal{M}$ is a tuple $\mathcal{M}=(S,\hat{s},A,T,C)$ where $S$ is a finite set of states, $\hat{s}$ is the initial state, $A$ is a finite set of actions, $T:S\times A\times S\rightarrow [0,1]$ is a probabilistic transition function with $
		T(s,a,s'):=P(s'|s,a),\text{ for all }    s,s'\in S, a\in A$;  $C:S\times A\rightarrow\mathbb{R}_{\geq 0}$ is a cost function.
\end{definition}
\subsection{Hidden-Model MDPs}\label{subsec:hmmdps}
The classification problem stated in Section \ref{sec:introduction} assumes that the object of interest is an unknown MDP that belongs to a known finite set $\mathcal{C}$ of MDPs. Formally, such a problem can be modeled in the framework of hidden model MDP (HMMDP) where the   underlying true MDP is from a finite set  $\mathcal{M}=\{\mathcal{M}_i|i\in\mathcal{C}\}$, $\mathcal{C}=\mathcal{C}_1\times\mathcal{C}_2...\times\mathcal{C}_m$. Therefore, each model has $m$ attributes and each attribute is a finite set. For example, $\mathcal{C}_1$ could refer to gender and  $\mathcal{C}_2$ could refer to age.

We assume, without loss of generality, that the MDPs share the same state space $S$, initial state $\hat{s}$, the action set $A$ and cost function $C$.
\begin{example}\label{exp:1}
Consider a medical diagnosis scenario, where each MDP $\mathcal{M}_i$ captures how stages for a disease $i$ evolve based on tests and treatments. In particular, there are three states in each MDP, and the states in the MDP represent the early stage, medium stage, and late stage of the disease, with increasing order of severity. There is a family of $|\mathcal{C}|$ diseases out of which one of them is true that needs to be diagnosed and treated. Therefore, there is only one attribute in $\mathcal{C}$. The action space includes $|\mathcal{C}|+1$ actions, namely $a_i,i\in\{1,..,|\mathcal{C}|\}$ for treatment $i$ and $a_{|\mathcal{C}|+1}$ for doing nothing but observe. Each treatment $a_i,i\in\{1,..,|\mathcal{C}|\}$ is more effective on the disease $i$ with a cost $C(s,a_i)$ and  $a_{|\mathcal{C}|+1}$ will introduce no cost. Each treatment or passive observation will introduce a probabilistic transition between different stages of the disease. Figure \ref{fig:hmmdp} shows all the possible transitions while omitting the transition probabilities for simplicity.
\end{example}
		\begin{figure}[t]
			\centering
				\begin{tikzpicture}[shorten >=1pt,node distance=3cm,on grid,auto, bend angle=20, thick,scale=1, every node/.style={transform shape}] 
				\node[state,initial] (s0)   {$s_1$}; 
				\node[state] (s1) [right= of s0] {$s_2$}; 	
				\node[state] (s2) [right= of s1]  {$s_3$}; 
		        \node[text width=3cm] at (0.8,-0.8) {stage 1};
				\node[text width=3cm] at (3.8,-0.8) {stage 2};
				\node[text width=3cm] at (6.8,-0.8) {stage 3};
 	
				\path[->]
				(s0) edge [loop above] node {}()
				(s0) edge [bend left] node {} (s1)
				(s1) edge [loop above] node {}()
				(s1) edge [bend left] node {} (s0)  
				(s1) edge node {} (s2)  
				(s2) edge [loop above] node {}()
				; 
				\end{tikzpicture} 
			
			\caption{Feasible state transitions in MDPs for Example \ref{exp:1}.}
\vspace{-4mm}
			\label{fig:hmmdp}
		\end{figure}
Given the initial state $\hat{s}$, we denote $\hat{b}_{\hat{s}}(i)$ as the initial  probability that $\mathcal{M}_i$ is the  underlying true model. For Example \ref{exp:1},  $\hat{b}_{\hat{s}}(i)$ denotes the initial likelihood of the true disease being disease $i$.  Then an HMMDP is essentially a partially observable Markov decision process (POMDP) $\mathcal{P}=(S\times\mathcal{C},\pi,A,T,Z,O,C)$ where $S\times\mathcal{C}$ is s finite set of states;
	$\pi:S\times\mathcal{C}\rightarrow[0,1]$ is the initial state distribution with $\pi(s,i)=\hat{b}_{\hat{s}}(i),  \text{if } s=\hat{s}$ and $
		0  \text{ otherwise;}$
		$A$ is a finite set of actions that is the same as the underlying MDPs;
		\ $T$ is given as follows,
		\begin{equation}\nonumber
		\begin{split}
		    T((s,i),a,(s',i'))&= T_i(s,a,s')  \text{ if } i=i',\\
		                      &= 0  \text{               otherwise};
		\end{split}
		\end{equation}
		$Z$ is the set of all possible observations;
		$O:S\times\mathcal{C}\times Z\rightarrow [0,1]$ is the observation function that deterministically identifies the state (but not the model element); and
		$C:S\times A\rightarrow\mathbb{R}_{\geq 0}$ is the cost function.

The definition of $T$ implies that the underlying true model $\mathcal{M}_i$ will not change to any other model during the classification process. 

It is essential to keep track of both the accumulated cost $e\in E\subset\mathbb{R}_{\geq 0}$ and a \emph{belief}  $b$ where $b(s,i)\in[0,1],\sum_s\sum_{i\in\mathcal{C}}b(s,i)=1$, which is a probability distribution over all the possible HHMDP states. From the definition of the observation function $O$, it can be seen that the observation gives  perfect information about the state element $s$ in the state-model tuple $(s,i)$, but not the model element $i$. Therefore, when $z$ is observed, we denote $b(s,i)=b_s(i)$ for simplicity. The belief space is reduced to $B\subseteq \mathcal{R}^{|\mathcal{C}|}$. We can then obtain a belief MDP $\mathcal{B}=(Q=B\times E,\hat{q}=(\hat{b}_{\hat{s}},0),A,T)$, where
	\begin{itemize}
        \item a state is denoted as $q=(b,e)$, where $b\in B$ is the belief and $e\in E$ is the accumulated cost so far;
		\item The state transition probability is described by \begin{equation}\label{equation:HMMDP transition function}  
		\begin{split}
T(q,a,q')&=T((b,e),a,(b',e'))= T(b_{s},a,b'_{s'})\\&=\sum_i b_s(i)T_i(s,a,s'),	    
		\end{split} 
		\end{equation} 
  \vspace{-0.4in}
		\begin{align}\label{equation:HMMDP belief update}
b'_{s'}(i)&=\frac{O((s,i),z)T_i(s,a,s')b_s(i)}{\sum_j O((s,j),z)T_j(s,a,s')b_s(j)} \\ e'&=e+C(s,a).	
		\end{align}

	\end{itemize}

A state-action path $\omega$ of length $N$ in the belief MDP $\mathcal{B}$ is of the form $\omega=(b_{s_0},e_0)a_0(b_{s_1},e_1)a_1...(b_{s_N},e_N)$. The accumulated cost $C(\omega)$ along the path $\omega$ is given by 
$
  C(\omega)=e_N.
$
At each state of the belief MDP, the choice of actions is determined by a policy $\mu=\{\mu_i|\mu_i:B\times E\rightarrow A,0\leq i\leq H \}$. Given a strategy $\mu$ to resolve the nondeterminism in the action selection of the HMMDP model, it is possible to calculate the probability  of the occurrence of a path $\omega$ by 
$
P(\omega)=\prod_i T(b_i,a_i,b_{i+1}),
$
where $\mu_i(b_{s_i},c_i)=a_i$.
\section{Problem Formulation}\label{sec:Problem Formulation}
{We are interested in classifying a particular attribute $\mathcal{C}_j$ where $j=\{1,...,m\}$ of the underlying MDP. Suppose $\mathcal{C}_j=\{1,2,\dots,k\}$, i.e., it takes one of $k$ values.  To make a classification decision within a finite time bound $H$,  we keep track of the belief $b_s$ and claim the attribute $\mathcal{C}_j$ of underlying model is $i$, whenever }
\begin{equation}\label{equation:reachability}
{\sum_{\mathcal{C}_j(k)=i}b_s(k)\geq\lambda_i,}
\end{equation}
{where $\mathcal{C}_j(k)=i$ represents the attribute $\mathcal{C}_j$ of the MDP model $\mathcal{M}_k$ being $i$,   $\lambda_i\in(0.5,1]$ denotes the minimum confidence to claim that the attribute $\mathcal{C}_j$ belongs to $i$.}

{On the other hand, reaching a classification decision may not be the only objective. For example, it may also be desired to reach a diagnosis decision at the early or intermediate stages of a disease. In some applications, it is also important to preserve some secret as characterized in belief space from being learned during active classification. Therefore, the belief may have to be constrained within a set of \emph{safe} belief $B_{safe}\subseteq B$ before the classification decision is reached. In such cases, the set of goal states $G$ is defined to be $G=\cup_{i\in\mathcal{C}} G_i$ where $G_i=\{b|b\in B_{safe}, \sum_{\mathcal{C}_j(k)=i}b_s(k)\geq\lambda_i\}$.} 
A belief state in $G$ is a terminal state in the belief MDP. Once the terminal state is reached, the classification task is accomplished. 

In many practical applications, it is also essential to accomplish the classification task with a fixed amount of cost. That is, for   state-action path $\omega$ in the belief MDP $\mathcal{B}$ where $\omega=(b_{s_0},e_0)a_0(b_{s_1},e_1)a_1...(b_{s_N},e_N)$ such that $b_{s_i}\notin G,i<H$ and $b_{s_N}\in G,N\leq H$, it is required that 
\begin{equation}\label{equation:cost}
   C(\omega)\leq D.
\end{equation} 
Here $\omega$ denotes a path that a classification decision is met for the first time while the beliefs stay inside of $B_{safe}$. We denote $\Omega_G$ as the set of such paths that reach $G$ within time bounds $H$ and cost bound $D$ while remaining in the safe sets. 

For the classification task, the objective is to compute a policy $\mu$ to dynamically select classification actions.  
With $\mu$, we can get the probability $P^{\leq D}_{\mu}(B_{safe}U^{\leq H} G)$ to reach a classification decision within time bound $H$ and cost bound $D$, where
$
P_\mu^{\leq D}(B_{safe}U^{\leq H}G)=\sum_{\omega\in\Omega_G}P(\omega).
$
{Put the pieces together, the active classification problem aims to compute a policy $\mu^*$ such that}
\begin{equation}{\label{equation:safe reachability}}
    {\mu^*=\argmax_\mu P^{\leq D}_{\mu}(B_{safe}U^{\leq H} G).}
\end{equation}
\section{Cost-Bounded Active Classification}\label{sec:Cost-Bounded Active Classification}
In this section we introduce two approaches to solve the active classification problem as defined by (\ref{equation:safe reachability}).

\subsection{Cost-bounded unfolding}
In this approach, the first step is to obtain a finite belief-state MDP $\mathcal{B}$ from the HMMDP model considering the accumulated cost. Such a procedure is called \emph{unfolding} and inspired by the similar treatment in MDPs \cite{andova2003discrete} for cost-bounded properties. In this paper, we extend this procedure to POMDPs.

Algorithm \ref{alg:unfold} describes how to obtain $\mathcal{B}$. It is a recursive breadth-first traversal starting from the initial  state $\hat{q}=(\hat{b}_{\hat{s}},0)$ with initial belief $\hat{b}_{\hat{s}}$ and zero accumulated cost. The algorithm goes on for $H$ iterations or until there is no more state to explore,   as can be seen in Line \ref{algorithm:terminate condition}. At each iteration $i$, we iterate through every state $q=(b,e)$ (Line \ref{algorithm:q}), where $b$ is the belief state, $e$ is the cost accumulated so far. For each action $a\in A$ (Line \ref{algorithm:action} ), we calculate its next accumulated reward $e'$ (Line \ref{algorithm:cost}). We terminate further exploring the successors of this state if $e'>D$ (Line \ref{algorithm:cost violation}), i.e., when the cost bound is exceeded. Otherwise,  the successor belief state $b'$ is computed (Line \ref{algorithm:belief}) as well as the transition probability (Line \ref{algorithm:transition}). {If $b'\in G$,  a  classification decision is reached, otherwise if $b'\in B_{safe}$ as shown in Line \ref{algorithm:safe or not},  $q'=(b',e')$ is added to a set $Next$ and will be expanded in the next iteration (Line \ref{algorithm:next state}). By construction, a state is a terminal state if it violates $B_{safe}$ or the cost bound or its belief component belongs to $G$.}

\begin{figure}
		\begin{algorithm}[H]
  \begin{algorithmic}[1]
			\STATE{\textbf{INPUT:}{  An HMMDP model with MDPs $\mathcal{M}_i,i\in\mathcal{C}$, $\mathcal{C}=\{1,...,L\}$, time bound $H$, cost bound $D$, initial belief $\hat{b}_{\hat{s}}$ and reachability constraint as defined in (\ref{equation:reachability}).}}
		\STATE{\textbf{OUTPUT:} 	Finite state MDP $\mathcal{B}=(Q=B\times E,\hat{q},A,T)$, $Q=\{(\hat{b}_{\hat{s}},0)\},Cur=\{(\hat{b}_{\hat{s}},0)\},i=0$}
			\WHILE {{$i<H$ and $Cur\neq\emptyset$ \label{algorithm:terminate condition}}}
			        \STATE {$Next=\{\}$}
    			    \FOR{$q=(b_s,e)\in Cur$ \label{algorithm:q}}
    				\FOR{$a\in A$\label{algorithm:action}}
    				\STATE{$e'=e+C(s,a)$ \label{algorithm:cost}}
    				\IF {$e'\leq D$ \label{algorithm:cost violation}}
    				 \FOR{$s'\in S$}
        				\STATE{Compute $b'$ according to (\ref{equation:HMMDP belief update}) \label{algorithm:belief}};
        				\STATE{Let $q'=(b',e')$, $T(q,a,q')$ is computed according to (\ref{equation:HMMDP transition function})\label{algorithm:transition}}
        				\IF {$q'\notin Q$}
        				\STATE {$Q=Q\cup  q'$}
        				 \IF {$b'\notin G$ and $b'\in B_{safe}$}\label{algorithm:safe or not}
        				  \STATE{$Next=Next\cup q'$\label{algorithm:next state}}
        				\ENDIF
                            \ENDIF
        				\ENDFOR
    				\ENDIF
                        \ENDFOR
    			\ENDFOR
					\STATE{ $Cur=Next$}
			\ENDWHILE

			\RETURN{$\mathcal{B}$}
			\end{algorithmic}
			\caption{Cost-Bounded Unfolding}\label{alg:unfold}	
   
		\end{algorithm}
	\end{figure}

From the output of Algorithm \ref{alg:unfold}, it can be observed that the accumulated cost is already encoded in the state space of $\mathcal{B}$. Once $\mathcal{B}$ is obtained, it is then possible to calculate the optimal strategy $\mu^*$ on $\mathcal{B}$ to achieve the following probability
\begin{equation}\label{equation:reachability1}
{P^{\leq D}_{max}(B_{safe}U^{\leq H} G)=P_{\mu^*}(B_{safe}U^{\leq H}G),}
\end{equation}
i.e., the maximized probability to reach a classification decision within $H$ steps but without considering the cost bound. 

To get $\mu^*$, it is needed to compute the maximal probability, denoted as $P^q_{max}(B_{safe}U^{\leq k }G)$ to reach $G$ with in $k\in\{0,...,H\}$ steps from any $q\in Q$ where
$
P^{\hat{q}}_{max}(B_{safe}U^{\leq H }G) = P_{max}(B_{safe}U^{\leq H} G).
$
{We first divide the state set $Q$ into three disjoint subsets  $Q^{yes}=\{q=(b,e)|b\in G\}$, $Q^{no}=\{q=(b,e)|b\notin B_{safe}\}$ and $ Q^{?}=Q\backslash (Q^{yes}\bigcup Q^{no})$. The computation of $P^q_{max}(B_{safe}U^{\leq k }G)$ is essentially a  dynamic program as shown below.} 
\begin{align}
    &{P^q_{max}(B_{safe}U^{\leq i} G)=1, ~~~\forall q\in Q^{yes},i\in\{0,...,H\},}\nonumber\\
    &{P^q_{max}(B_{safe}U^{\leq i} G)=0, ~~~\forall q\in Q^{no}, i\in\{0,...,H\},}\nonumber\\
    &{P^q_{max}(B_{safe}U^{\leq 0} G)=0, ~~~\forall q\in Q^{?},}\nonumber\\
    &{P^q_{max}(B_{safe}U^{\leq i} G)=}\nonumber\\
    &{\max_{a\in A} \sum_{q'\in Q}T(q,a,q')P^{q'}_{max}(B_{safe}U^{\leq i-1}G), \forall q\in Q^?,i\neq 0.}\nonumber
\end{align}
{Then it can be seen that  for $i\in\{1,...,H\}$,}
$${\mu^*_i(q)=\argmax_{a\in A}\sum_{q'\in Q}T(q,a,q')P^{q'}_{max}(B_{safe}U^{\leq i-1}G).}$$

\subsection{ Adaptive Sampling in Belief Space}

The exact solution requires constructing the belief MDP $\mathcal{B}$. It leads to the curse of dimensionality in the computation of $P_{max}(B_{safe}U^{\leq H}G)$. 
In this subsection, we propose an alternative approach  using an offline MCTS method inspired by the adaptive multi-stage sampling algorithm (AMS) algorithm proposed in \cite{chang2005adaptive} to estimate the optimal classification probabilities. 

The key observations for the active classification problem in (\ref{equation:safe reachability}) that make the AMS a reasonable choice are as follows. First, since the  MDP models in $\mathcal{M}$ are typically  smaller than the belief MDP $\mathcal{B}$, it is easier to simulate sample paths in $\mathcal{B}$ than explicitly specifying $\mathcal{B}$ itself. Furthermore, AMS is  particularly suitable for models where it is unlikely to revisit the same belief state multiple times in a sampled run  \cite{chang2005adaptive}. It is exactly the case for the belief MDP $\mathcal{B}$ obtained with Algorithm \ref{alg:unfold}. It can be observed that in $\mathcal{B}$, the action space remains the same as  the original MDPs. Furthermore, the belief states in $\mathcal{B}$ take  values in a continuous space so that generally it is very rare to revisit the same belief state with the same accumulated cost in a simulated run. Algorithm \ref{alg:ams} shows the belief state sampling procedure, termed as CB-AMS short for cost-bounded adaptive multi-stage sampling.

Algorithm \ref{alg:ams} takes the input of a state $q=(b,e)$ in $\mathcal{B}$, the number of $N_i$ sampling needed, and the current time horizon $i$. It outputs $ \tilde{P}_i^{N_i}(q)$, which is the estimated maximal probability to reach a classification decision  while remaining in the safe belief subset from state $q$ with $H-i$ steps. The initial call of Algorithm \ref{alg:ams} is CB-AMS$((b_0,0),N_0,0)$  for the initial belief $b_0$, the initial cost of $0$, the number $N_0$ of samples needed, and time horizon $0$. 

\begin{figure}[t]
		\begin{algorithm}[H]
 \begin{algorithmic}[1]
			\STATE{\textbf{INPUT:}{  A state $q=(b_s,e)$ in $\mathcal{B}$, the number of samples $N_i$, time horizon $i$.}}
			\STATE{\textbf{OUTPUT:}{The estimated maximal probability $ \tilde{P}_i^{N_i}(q)$.}}
		    \IF{$e>D$ or $i>H$ \label{algorithm:check1} or $b_s\notin B_{safe}$}
			    \RETURN {0}
			\ENDIF
			\IF {$b_s\in G$\label{algorithm:check2}}
			    \RETURN {1}
			\ENDIF
			\FOR{$a\in A$ \label{algorithm:init1}}
			    \STATE {Sample a next state $q'=(b',e')$ by taking action $a$, where $e'=e+C(s,a)$\label{algorithm:sample1}}
			    \STATE {$\tilde{Q}(q,a)=$ CB-AMS($q',N_{i+1},i+1$) \label{algorithm:init2}}
			    \STATE {$N_{a,i}^q=1$}
			\ENDFOR
			\STATE {$n=|A|$\label{algorithm:init3}}
			\WHILE {$n<N_i$ }\label{algorithm:start}
			        \STATE {$a^*=\argmax_a(\frac{\tilde{Q}(q,a)}{N_{a,i}^q}+\sqrt{\frac{2\ln{n}}{N_{a,i}^q}})$\label{algorithm:action selection}}
			        \STATE {Sample a next state $q'=(b',e')$ by taking action $a^*$, where $e'=e+C(s,a^*)$\label{algorithm:sample2}}
			        \STATE {$\tilde{Q}(q,a^*)=\tilde{Q}(q,a^*)+$ CB-AMS($q',N_{i+1},i+1$) \label{algorithm:recur}}
			        \STATE {$N_{a^*,i}^q=N_{a^*,i}^q+1$, $n=n+1$\label{algorithm:end}}
		    \ENDWHILE
            \STATE {$ \tilde{P}_i^{N_i}(q)=\frac{1}{N_i}\sum_a\tilde{Q}(q,a)$}
			\RETURN {$ \tilde{P}_i^{N_i}(q)$}
			\end{algorithmic}
		\caption{Cost-bounded adaptive multi-stage sampling (CB-AMS)}\label{alg:ams}			
		\end{algorithm}
	\end{figure}
	
The algorithm first checks if the termination condition is met at Line \ref{algorithm:check1}. If the accumulated cost or the time horizon exceeds the allowed bound or the belief state $b_s$ violates the safety condition, the algorithm will return  $0$, since the probability to reach a classification decision without violating safety constraints from this state $q$ is $0$. Otherwise, at Line \ref{algorithm:check2}, if the belief $b$ reaches its goal set $G$, the algorithm will return $1$, meaning that the probability to reach a classification decision subject to cost and time bounds, as well as the safety constraints from this state $q$, is $1$.

If the state $q$ is not a terminal state, then the algorithm proceeds to an initialization procedure from Line \ref{algorithm:init1} to Line \ref{algorithm:init2}. It first tries every action $a\in A$ and samples a subsequent state $q'=(b',e')$, where $b'$ is sampled based on the transition probability as defined in (\ref{equation:HMMDP transition function}). At Line \ref{algorithm:init2}, CB-AMS is called  where $\tilde{Q}(q,a)$ denotes the accumulated returned values, and each returned value represents the estimated probability to successfully reach a classification decision from $q$ by executing action $a$). Then we set $N_{a,i}^q$ to be one, where $N_{a,i}^q$ denotes the number of times an action $a$ is sampled from state $q$ at time horizon $i$. This initialization procedure is required  since $N_{a,i}^q$ will be used in Line \ref{algorithm:action selection} where $N_{a,i}^q$ must be nonzero. 

At Line \ref{algorithm:init3}, $n$ denotes  the number of samples collected and is initialized to be $|A|$ since we just tried each action exactly once.  We will then enter the adaptive sampling loop that terminates when the number of samples $n$ reaches $N_i$. In each sampling iteration, we first select an action by the equation defined in Line \ref{algorithm:action selection}. The selection criterion balances between high average return value by the term $\frac{\tilde{Q}(q,a)}{N_{a,i}^q}$ (exploitation) and trying actions that are less sampled by the term $\sqrt{\frac{2\ln{n}}{N_{a,i}^q}}$ (exploration). Once the action $a$ is selected, the algorithm will proceed to sample the next state $q'$ (Line \ref{algorithm:sample2}) and accumulate $\tilde{Q}(q,a)$ by the value returned from calling CB-AMS with the next state as the input (Line \ref{algorithm:recur}). Then $N_{a,i}^q$ and $n$ will both be incremented by one (Line \ref{algorithm:end}). The last step is to average over the accumulated return values and return the result.   

Now we want to analyze the asymptotic performance of Algorithm \ref{alg:ams}. In Algorithm \ref{alg:ams}, we are effectively sampling from the belief MDP $\mathcal{B} $ with the state $q=(b_s,e)$ at time step $i\leq H$.  We denote $R(q,a)$ as the reward by executing action $a$ at state $q$. Note that this reward is not to be confused with the cost function $C$ that represents the classification cost, for example, the test and treatment costs for medical diagnosis. 

Given a strategy $\mu=\{\mu_i|\mu_i:Q\rightarrow A,0\leq i\leq H\}$, the value function $V_i^\mu(q)$ for state $q$ and time step $i$ is
$$
{V^\mu_i(q)=R(q,\mu_i(q))+\sum_{q'\in Q}P(q,\mu_i(q),q')V^\mu_{i+1}(q')}.
$$
We assign $R(q,a)=1$ for all $ a\in A, q=(b_s,e),e\leq D$ and $ b_s\in G$ . Otherwise, $R(q,a)=0$. Once reaching a state $q=(b_s,e)$ with $e>D$ or $b_s\notin B_{safe}$ or $b_s\in G$, the algorithm will return  and such $q$ will not have successive states. From Line \ref{algorithm:check1} and \ref{algorithm:check2},  given a state $q=(b_s,e)$, we know that
\begin{itemize}
    \item $V^\mu_i(q)=0$ if $e>D$ or $b_s\notin B_{safe}$.
    \item $V^\mu_i(q)=1$ if $e\leq D$ and $b_s\in G$.
    \item $V^\mu_{H+1}(q)=0$. 
\end{itemize}   
{It then can found that }
${V^\mu_i(q)=P^{q}(B_{safe}U^{\leq H-i}G),}$
{where $P^{q}(B_{safe}U^{\leq H-i}G)$ denotes the probability to satisfy $B_{safe}U^{\leq H-i}G$ from state $q$. We denote $V_i^*(q)=max_\mu V_i^\mu(q)$.}

{At any time horizon $i$ and state $q$, we denote $U(q')$ as the value returned from calling CB-AMS algorithm for $i+1$ and $q'$. It can be observed that $U(q')$ is a non-negative random variable with unknown distribution and a bounded support. At time step $i$, the sampling process from Line \ref{algorithm:start} to Line \ref{algorithm:end}  can be seen as a one-stage sampling without going further into future time steps where the value function $U(q')$ in state $i+1$ is returned from a black box. Denote}
$$
{U_{max}=\max_{q,a}(R(q,a))+\sum_{q'}P(q,a,q')E[U(q')],}
$$
{which satisifies $U_{max}\leq 1$.}

{The following lemma helps  prove the convergence of our proposed CB-AMS algorithm in Theorem \ref{thm:CB-AMS}}.

\begin{lemma}\label{lemma:1}\cite{chang2005adaptive}
{Given a stochastic value function $U$ over $Q$ with $U_{max}\leq 1$, at any time horizon $i$, state $q$ and $N_i$, define }
$V^*(q)=\max_{a\in A}(R(q,a)+\sum_{q'}P(q,a,q')E[U(q')]),$
then for any $q\in Q$,
$
\lim_{N_i\rightarrow \infty}E[\tilde{P}_i^{N_i}(q)]= V^*(q).
$
\end{lemma}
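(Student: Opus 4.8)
The plan is to recognize the inner sampling loop (Lines~\ref{algorithm:start}--\ref{algorithm:end}, together with the one-shot initialization in Lines~\ref{algorithm:init1}--\ref{algorithm:init2}) as a single-stage stochastic multi-armed bandit played under the UCB1 rule, and then to import the regret analysis of that rule, exactly in the spirit of \cite{chang2005adaptive}. First I would fix the state $q$ and horizon $i$ and treat each action $a\in A$ as an arm. A pull of arm $a$ samples a successor $q'$ according to $P(q,a,\cdot)$ and returns the black-box value $U(q')$; because the recursion at level $i+1$ draws fresh independent randomness on each invocation and $q'$ is resampled every time, the returns collected for a fixed arm $a$ are i.i.d.\ with common mean $\mu_a:=R(q,a)+\sum_{q'}P(q,a,q')E[U(q')]$ and, since $U_{max}\leq 1$, with support in $[0,1]$. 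Under this identification $V^*(q)=\max_{a}\mu_a$, and the selection rule in Line~\ref{algorithm:action selection} is precisely UCB1 with the exploration constant appropriate for $[0,1]$-valued rewards.

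Next I would rewrite the output as an average reward over all $N_i$ pulls. Writing $X_t$ for the return obtained on the $t$-th pull and $a_t$ for the (history-measurable) arm chosen there, we have $\sum_a\tilde{Q}(q,a)=\sum_{t=1}^{N_i}X_t$ with $E[X_t\mid\mathcal{F}_{t-1}]=\mu_{a_t}$. The tower property then gives
$$E[\tilde{P}_i^{N_i}(q)]=\frac{1}{N_i}\sum_{a\in A}\mu_a\,E[N_{a,i}^q]=V^*(q)-\frac{1}{N_i}\sum_{a\in A}\Delta_a\,E[N_{a,i}^q],$$
where $\Delta_a:=V^*(q)-\mu_a\geq 0$ and I have used $\sum_a E[N_{a,i}^q]=N_i$. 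This reduces the claim to showing that the expected number of pulls of every suboptimal arm grows only sublinearly in $N_i$.

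That bound is exactly the content of the UCB1 regret analysis: for each arm with $\Delta_a>0$ one has $E[N_{a,i}^q]\leq \frac{8\ln N_i}{\Delta_a^2}+1+\frac{\pi^2}{3}$, whence
$$0\leq V^*(q)-E[\tilde{P}_i^{N_i}(q)]\leq \frac{1}{N_i}\sum_{a:\,\Delta_a>0}\Delta_a\left(\frac{8\ln N_i}{\Delta_a^2}+1+\frac{\pi^2}{3}\right)=O\!\left(\frac{\ln N_i}{N_i}\right),$$
which tends to $0$ as $N_i\to\infty$, giving the claimed limit; arms with $\Delta_a=0$ contribute nothing, so ties for the maximizer are harmless.

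The step I expect to be the main obstacle is justifying the i.i.d.\ structure of the per-arm returns rigorously enough to invoke the Chernoff--Hoeffding concentration that underlies the UCB1 count bound: one must argue that the value returned by each recursive CB-AMS call, for a fixed budget $N_{i+1}$, is an independent draw from a fixed distribution on $[0,1]$ depending only on the sampled successor, so that the empirical means $\tilde{Q}(q,a)/N_{a,i}^q$ concentrate around $\mu_a$ uniformly across the pulls. Once this independence-and-boundedness is in place, the remaining ingredients---the tower-property bookkeeping and the off-the-shelf UCB1 bound---are routine.
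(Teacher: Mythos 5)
The paper states this lemma without proof, merely citing \cite{chang2005adaptive}; your argument---rewriting the bias via the tower property as $V^*(q)-E[\tilde{P}_i^{N_i}(q)]=\frac{1}{N_i}\sum_a \Delta_a E[N_{a,i}^q]$ and killing each term with the UCB1 bound $E[N_{a,i}^q]=O(\ln N_i)$---is a correct and faithful reconstruction of exactly the proof in that reference. The independence issue you flag does go through as you hope: each pull of arm $a$ draws a fresh successor from $P(q,a,\cdot)$ and an independent run of CB-AMS with the fixed budget $N_{i+1}$, so the per-arm returns are i.i.d.\ with support in $[0,1]$ (note only that the memoization of $\tilde{P}_i^{N_i}(q)$ mentioned in the paper's experiments would technically break this independence, but it is not part of Algorithm~\ref{alg:ams} as analyzed).
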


{Then the following theorem shows that the output of Algorithm \ref{alg:ams} converges to  $P_{max}(B_{safe}U^{\leq H} G)$  as the number of samples $N_i,0\leq i\leq H$, goes to infinity.} 
\begin{thm}\label{thm:CB-AMS}
The algorithm CB-AMS with input $N_i$ for $i=0,...,H$ and an arbitrary initial condition $q\in Q$ satisfies
$$
{\lim_{N_0\rightarrow\infty}\lim_{N_1\rightarrow\infty}\dots\lim_{N_H\rightarrow\infty}E[\tilde{P}_0^{N_0}(q)] = P^q_{max}(B_{safe}U^{\leq H}G)},
$$
{where $P^{q}_{max}(B_{safe}U^{\leq H}G)$ represents  the maximum probability to reach the decision region $G$ in $H$ steps with costs no larger than $D$ from state $q$ while staying inside of $B_{safe}$}.
\end{thm}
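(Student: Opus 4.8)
The plan is to prove the statement by backward induction on the time index $i$, descending from the horizon $H$ down to $0$, using Lemma~\ref{lemma:1} to supply the single-stage convergence at each step. The object I would track is $V^*_i(q)=P^q_{max}(B_{safe}U^{\leq H-i}G)$, which by the value-function identity established just before the theorem equals $\max_\mu V^\mu_i(q)$ and which satisfies the finite-horizon Bellman optimality recursion
$$
V^*_i(q)=\max_{a\in A}\Big(R(q,a)+\sum_{q'\in Q}P(q,a,q')\,V^*_{i+1}(q')\Big),
$$
together with the boundary values $V^*_i(q)=1$ when $b_s\in G$, $V^*_i(q)=0$ when $e>D$ or $b_s\notin B_{safe}$, and $V^*_{H+1}(q)=0$ --- exactly the terminal rules encoded in Lines~\ref{algorithm:check1}--\ref{algorithm:check2} of Algorithm~\ref{alg:ams} and the reward assignment $R(q,a)=\mathbf{1}[b_s\in G]$. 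The theorem is then the special case $i=0$, since $V^*_0(q)=P^q_{max}(B_{safe}U^{\leq H}G)$.

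For the base case, I would observe that a call at the deepest level returns deterministically: a feasible interior state at time $i=H$ can only recurse into level $H+1$, which returns $0$ by Line~\ref{algorithm:check1}, while goal and infeasible states return $1$ and $0$ respectively before any sampling. Hence $E[\tilde{P}_H^{N_H}(q)]=V^*_H(q)$ identically, so the inductive claim holds trivially at $i=H$. For the inductive step, assume the claim at level $i+1$, namely $\lim_{N_{i+1}}\cdots\lim_{N_H}E[\tilde{P}_{i+1}^{N_{i+1}}(q')]=V^*_{i+1}(q')$ for every $q'$. The value $U(q')$ returned by the recursive call CB-AMS$(q',N_{i+1},i+1)$ is exactly $\tilde{P}_{i+1}^{N_{i+1}}(q')$; for fixed $N_{i+1},\dots,N_H$ it is a bounded ($U_{max}\leq 1$) stochastic value function over $Q$, and the repeated recursions inside the stage-$i$ loop furnish independent draws of it. Applying Lemma~\ref{lemma:1} at horizon $i$ with this $U$ yields
$$
\lim_{N_i\rightarrow\infty}E[\tilde{P}_i^{N_i}(q)]=\max_{a\in A}\Big(R(q,a)+\sum_{q'\in Q}P(q,a,q')\,E[\tilde{P}_{i+1}^{N_{i+1}}(q')]\Big).
$$
Taking the remaining limits $N_{i+1},\dots,N_H\rightarrow\infty$ and using the induction hypothesis together with the continuity of $x\mapsto\max_a\big(R(q,a)+\sum_{q'}P(q,a,q')x_{q'}\big)$ on the finite sets $A$ and $Q$, the right-hand side converges to $\max_a\big(R(q,a)+\sum_{q'}P(q,a,q')V^*_{i+1}(q')\big)=V^*_i(q)$, which closes the induction and delivers the claim at $i=0$.

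The main obstacle is the rigorous handling of the nested limits, i.e.\ reconciling the order in which Lemma~\ref{lemma:1} is applied stage by stage with the order written in the theorem. The delicate point is that the adaptive UCB rule in Line~\ref{algorithm:action selection} makes $E[\tilde{P}_i^{N_i}(q)]$ depend on the \emph{full distribution} of the returns $U(q')$, not merely on their means, whereas the induction hypothesis only controls the means $E[\tilde{P}_{i+1}^{N_{i+1}}(q')]$. Lemma~\ref{lemma:1} is precisely the device that removes this dependence in the $N_i\rightarrow\infty$ limit, since its conclusion depends on $U$ only through $E[U]$; the remaining work is to justify interchanging the stage-$i$ expectation with the deeper limits, which is permissible because each stage executes only finitely many samples and all returns lie in $[0,1]$, so bounded convergence applies. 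Carefully verifying this interchange --- and confirming that composing the single-stage result of Lemma~\ref{lemma:1} along the horizon produces exactly the iterated limit in the statement --- is where the real care is needed; the Bellman recursion and the base case are routine by comparison.
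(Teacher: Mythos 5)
Your proposal is correct and follows essentially the same route as the paper's own proof: a backward induction from $i=H$ down to $i=0$, invoking Lemma~\ref{lemma:1} at each stage to convert the single-stage sampling limit into the Bellman optimality recursion for $V^*_i$. Your additional care about interchanging the stage-$i$ expectation with the deeper limits and about the dependence of the UCB rule on the full return distribution is a refinement the paper leaves implicit, but it does not change the argument.
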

\begin{proof}
We prove the theorem by a backward inductive argument. Given a time bound $H$, we are only interested in the time period from $0$ to $H$. Therefore, at time $H+1$, we know that $E[\tilde{P}_H^{N_{H+1}}(q)]=V^*_{H+1}(q)=0$ for any $q\in Q$ from Line 1 of Algorithm \ref{alg:ams}. Then for $i=H$, by Lemma \ref{lemma:1}, we know that for any $q\in Q$, it holds that
\begin{align}
&\lim_{N_{H}\rightarrow\infty}E[\tilde{P}_{H}^{N_{H}}(q)]={\max_{a\in A}(R(q,a)}\nonumber\\&{+\sum_{q'}P(q,a,q')E[\tilde{P}_{H+1}^{N_{H+1}}(q')])=\max_{a\in A}R(q,a)}=V^*_H(q).\nonumber
\end{align}

Suppose it holds for an arbitrary $0<i<H-1$, that
\begin{align}
\lim_{N_i\rightarrow\infty}\dots\lim_{N_{H}\rightarrow\infty}E[\tilde{P}_{i}^{N_{i}}(q)]=V^*_{i}(q).\nonumber
\end{align}
Consequently for $i-1$, it holds that
\begin{align}
&\lim_{N_i\rightarrow\infty}\dots\lim_{N_{H}\rightarrow\infty}E[\tilde{P}_{i-1}^{N_{i-1}}(q)]=\nonumber\\&\max_{a\in A}(R(q,a)\nonumber+\sum_{q'}P(q,a,q')E[\tilde{P}_{i}^{N_{i}}(q')])\\&=\max_{a\in A}(R(q,a)+\sum_{q'}P(q,a,q')E[V^*_{i}(q')])=V^*_{i-1}(q).\nonumber
\end{align} 
Then by induction, we know that 
\begin{align}
{\lim_{N_0\rightarrow\infty}\lim_{N_1\rightarrow\infty}\dots\lim_{N_H\rightarrow\infty}E[\tilde{P}_0^{N_0}(q)]}  = {V^*_{0}(q)}\nonumber\\ {=P^q_{max}(B_{safe}U^{\leq H}G).} \nonumber   
\end{align}

\end{proof}
Once the optimal value function (probability) has been estimated by Algorithm \ref{alg:ams}, it is then possible to extract the policy at each state $q$ and horizon $i$ by $$\mu_i(q)^*=\argmax_a \sum_{q'\in Q}P(q,a,q)\tilde{V}_i^*(q'),$$ where $\tilde{V}_i^*(q)=\tilde{P}_i^{N_i}(q)$.

This sampling approach is from the given POMDP model, which can be obtained from history data, for example, the database of medical diagnosis. Therefore, at Line (\ref{algorithm:sample1}) and  Line (\ref{algorithm:sample2}) of Algorithm \ref{alg:ams}, we sample from known distributions as defined by the POMDP model $\mathcal{P}$, instead of actually trying medication actions to  patients and observe their reactions.

\subsection{Monte Carlo Tree Search}

\begin{figure}[t]
		\begin{algorithm}[H]
  \begin{algorithmic}[1]
			\STATE{\textbf{INPUT:} {  A state $q=(b_s,e)$ in $\mathcal{B}$, time horizon $i$, the number of samples $N$, update rate $\gamma$.}}
			\STATE{\textbf{OUTPUT:} A sequence of actions $(a_h)$.}
			\STATE {$q \gets q_0$}
			\STATE {$N^q \gets 0$}
			\STATE {$\mathcal{T} \gets \{q_0$\}}
			\FOR{$h \gets 1,\dots, H$}
    			\FOR{$i \gets 1,\dots,N$}
    				\WHILE{$q \in \mathcal{T}$} 
                        \STATE {$q \gets Select(q, \mathcal{T})$}
    			    \ENDWHILE
    			    \STATE{$(q, \mathcal{T}) \gets Expand(q, \mathcal{T})$}
    			    \STATE{$R \gets Simulate(q)$}
    				\WHILE{$q \in \mathcal{T}$ } 
    				    \STATE {$q \gets$ parent of $q$}
                            \STATE {$Backpropogate(q, R)$}
    			    \ENDWHILE
        	    \ENDFOR
                \STATE {$a_h \gets \argmax_{a}(\tilde{Q}(q_0,a))$}
                \STATE {$q' \sim P(\cdot \mid q,a)$}
                \STATE {$c \gets C(q,a,q')$}
            \ENDFOR
            \STATE{\textbf{FUNCTION:}{$Select(q,\mathcal{T})$}}
            \begin{ALC@g}
                \FOR{$q' \in $ children of $q$}
                    \IF{$N^{q'} = 0$}
                        \RETURN $q'$
                    \ENDIF
                \ENDFOR
                \STATE{$q'=\argmax_{q' \in \text{children of $q$}}(\frac{\tilde{Q}(q')}{N^{q'}}+\sqrt{\frac{2\ln{n}}{N^{q'}}})$}
                \RETURN {$q'$}\;
             \end{ALC@g}
            \STATE{\textbf{FUNCTION:}{$Expand(q)$}}
            \begin{ALC@g}
                \STATE{ $\bar q \gets$ parent of $q$}
                \STATE{$N^{q} \gets 0$}
                \STATE{$Q(q) \gets Q(\bar q)$}
            \end{ALC@g}
            \STATE{\textbf{FUNCTION:}$Simulate(q, c, i)$}
            \begin{ALC@g}
                \WHILE{$q$ is not terminal}
                    \STATE{choose random action $a$}
                    \STATE{$q' \sim P(\cdot \mid q,a)$}
                    \STATE{$i \gets i + 1$}
                    \STATE{Let $q' = (b', e')$}
                    \IF{$b' \in G$}
                        \RETURN 1
                    \ENDIF
                    \IF{$e' > D$ or $b' \not\in B_{safe}$}
                        \RETURN {0}
                    \ENDIF
                \ENDWHILE
                \RETURN {0}
            \end{ALC@g}
            \STATE{\textbf{FUNCTION:} $Backpropogate(q, R)$}
            \begin{ALC@g}
                \STATE{$N^q \gets N^q + 1$}
                \STATE{$\tilde{Q}(q) \gets \gamma (R - \tilde{Q}(q))$}
            \end{ALC@g}
			\end{algorithmic}
		\caption{Monte-carlo Tree Search (MCTS)}\label{alg:mcts}			
		\end{algorithm}
	\end{figure}

In this subsection, we propose an alternative heuristic to compute policies for active classification through the use of Monte Carlo tree search (MCTS) \cite{coulom2006mcts,browne2012survey}. 
Although CB-AMS does not require computing $P_{max}(B_{safe}U^{\leq H}G)$ exactly, it does suffer from exponential time complexity dependent on the number of samples $N_i$. We relax this sample complexity by adapting the CB-AMS algorithm to the realm of MCTS. 

MCTS is a class of algorithms that solving MDPs and POMDPs by combining a tree search with random sampling. We perform a MCTS over the finite belief-state MDP $\mathcal{B}$ where the value of nodes estimates the likelihood of successful classification from that belief state. MCTS algorithms are based four main components:
\begin{enumerate}
    \item \textit{Select}: Successively select children starting from the root $R$ of the tree until reaching a leaf node $L$.
    \item \textit{Expand}: Create node $C$ as a new child of $L$.
    \item \textit{Simulate}: Complete a random rollout from $C$.
    \item \textit{Backpropagate}: Use the result of the rollout to update nodes from $C$ back to $R$.
\end{enumerate}
The full MCTS algorithm for the active classification problem is given in Algorithm \ref{alg:mcts}.

The \textit{Selection} stage requires choosing children nodes (i.e., actions) with the most promise.
We use the same UCB exploration and exploitation tradeoff that we used for the CB-AMS algorithm. Namely, children nodes (i.e., actions) in the MCTS are chosen to maximize $\frac{\tilde{Q}(q)}{N^q}+\sqrt{\frac{2\ln{n}}{N^q}}$.

If the node found by the \textit{Selection} stage does not terminate the game, the \textit{Expand} stage expands the tree by randomly choosing a child node $C$.

The \textit{Simulate} stage involves rolling out a random execution of actions from node $C$ until termination. The rollout terminates as a failure if the time or cost bound is reached or the belief leaves $B_{safe}$. The rollout terminates as a success once the belief satisfies the classification condition, i.e., $b \in G$.

Finally, \textit{Backpropogate} stage updates the value of the nodes traversed during the \textit{Select} and \textit{Expand} according to the outcome of the \textit{Rollout} stage. It also increments the visit count for each of these nodes.


\section{Examples}\label{sec:Simulation}
This section provides two examples to illustrate the applications of the proposed active classification framework and compare the exact and sampling-based algorithms\footnote{The code to the example can be found in shorturl.at/fjsxM}. 
\subsection{Medical Diagnosis and Treatment}
Following Example \ref{exp:1}, there are two possible diseases modeled by two MDPs $\mathcal{M}_1$ and $\mathcal{M}_2$. The transition probabilities are as shown in the following matrices (\ref{equation:transtion probability}), where $T_i(a)(j,k)=T_i(s_j,a,s_k),i\in\{1,2\}$ and $j,k\in\{1,2,3\}$. The costs are as defined in (\ref{equation:cost at each state}) where $C(i,j)=C(s_i,a_j)$.

\begin{myequation}\label{equation:transtion probability}
\begin{split}
T_1(a_1)=\begin{bmatrix}
   0.8 &0.2 &0 \\
   0.7&0.2&0.1 \\
   0  &0 &1
\end{bmatrix},
T_1(a_2)=\begin{bmatrix}
   0.6 &0.4 &0 \\
   0.2 &0.4    &0.4 \\
   0 &0  &1
\end{bmatrix},\\
T_1(a_3)=\begin{bmatrix}
   0.5 &0.5 &0 \\
   0.1 &0.6    &0.3 \\
   0 &0  &1
\end{bmatrix}.
T_2(a_1)=\begin{bmatrix}
   0.6 &0.4 &0 \\
   0.1&0.5&0.4 \\
   0&0&1
\end{bmatrix},\\
T_2(a_2)=\begin{bmatrix}
   0.9 &0.1 &0 \\
   0.8 &0.1    &0.1 \\
   0 &0  &1
\end{bmatrix},
T_2(a_3)=\begin{bmatrix}
   0.3 &0.7 &0 \\
   0.1 &0.3   &0.6 \\
   0 &0  &1
\end{bmatrix},
\end{split}
\end{myequation}

\begin{myequation}\label{equation:cost at each state}
C=\begin{bmatrix}
   2 & 5 & 0 \\
   6 & 4 & 0 \\
   7 & 7 & 0
\end{bmatrix}.
\end{myequation}

The diagnosis decision is made for disease $1$ or $2$ if
$
    b_s(1)\geq \lambda_1\text{ or } b_s(2)\geq \lambda_2,
$
with the initial belief $\hat{b}_{s_1}=(0.5,0.5)$, cost constraint $D=10$. It is also desirable to diagnose the disease without reaching the late stage of the disease, where the corresponding $B_{safe}$ can be defined as
\begin{equation}{\label{equation:bsafe}}
    B_{safe}=\{b_s|s\neq s_3\}. 
\end{equation}
One step unfolding according to the Algorithm \ref{alg:unfold} is shown in Figure \ref{fig:unfolding 1} from the initial belief $\hat{b}_{s_1}=(0.5,0.5)$ and cost $c=0$. With three possible actions to take, there are six subsequent states in total, two for each action. The belief is updated with \eqref{equation:HMMDP belief update} and the cost is incremented according to \eqref{equation:cost at each state}. If $\lambda_1=0.8,\lambda_2=0.7$, it can be seen that if $a_2$ is executed, there is  $0.25$ probability that the disease is diagnosed to be type $1$ (since $b_{s_2}(1)=0.8$) at the shaded state $q_4$, with a cost of $5$ since $C(s_1,a_2)=5$. Therefore, $q_4$ will not be included in the states to be expanded in the next iteration. The unfolding will then start from $\{q_2,q_3,q_5,q_6\}$ for the next round.

	\begin{figure}
		\centering	
\begin{tikzpicture}[shorten >=1pt,node distance=2.5cm,on grid,auto, bend angle=20, thick,scale=0.7, every node/.style={transform shape}] 
				\node[state] (s0)   {$q_0$}; 
				\node[state] (s1) [left =of s0] {$q_1$}; 
				\node[state] (s2) [above left  = of s0]  {$q_2$}; 
				\node[state] (s3) [above right= of s0] {$q_3$}; 
				\node[state,shade] (s4) [right= of s0] {$q_4$}; 
				\node[state] (s5) [below left= of s0] {$q_5$}; 
				\node[state] (s6) [below right= of s0] {$q_6$}; 
                \node[text width=3cm] at (-4.5,0) {$b_{s_1}=(\frac{4}{7},\frac{3}{7}),c=2$};		
                \node[text width=3cm] at (-4,1.8) {$b_{s_2}=(\frac{1}{3},\frac{2}{3}),c=2$};
                \node[text width=4cm] at (4.3,1.8) {$b_{s_1}=(0.4,0.6),c=5$};
                \node[text width=4cm] at (5,0) {$b_{s_2}=(0.8,0.2),c=5$};
                \node[text width=4cm] at (-4.2,-1.8) {$b_{s_1}=(0.625,0.375),c=0$};;
                \node[text width=4cm] at (4.5,-1.8) {$b_{s_2}=(\frac{5}{12},\frac{7}{12}),c=0$};;
                \draw [->] (0,1) -- (s0);
				\path[->]
				
				(s0) edge node {$a_1,0.7$} (s1) 
				(s0) edge node {$a_1,0.3$} (s2) 
				(s0) edge node {$a_2,0.75$} (s3) 
				(s0) edge node {$a_2,0.25$} (s4) 
				(s0) edge node {$a_3,0.4$} (s5) 
				(s0) edge node {$a_3,0.6$} (s6) 
				; 

				\end{tikzpicture} 
		\caption{One step unfolding}\label{fig:unfolding 1}
	\end{figure}

We use C++ to implement both Algorithm \ref{alg:unfold} and Algorithm \ref{alg:ams} and Python to implement Algorithm \ref{alg:mcts}. For Algorithm \ref{alg:unfold}, the resulting MDP model is input into the PRISM \cite{kwiatkowska2011prism} model checker\footnote{PRISM is a probabilistic mode checking tool that can model and analyze the quantitative probabilistic behaviors for Markov chains, MDPs, and POMDPs. The property specification includes the temporal logic, quantitative specifications, and costs/rewards.} to compute the maximum probability (\ref{equation:reachability1}). For Algorithm \ref{alg:ams}, we set $N_i=300,0\leq i\leq H$. We also store the calculated values of $\tilde{P}_i^{N_i}(q)$ to avoid recomputing them. 
For Algorithm \ref{alg:mcts} we set $N=100$.
The results are as shown in Figure \ref{fig:prob1}. We can observe  that the maximum probability to safely reach $G$ without going out of $B_{safe}$ increases with a longer time horizon. However, due to the extra safety constraint, the maximum probabilities also decrease, compared to the results without safety constraint. The CB-AMS algorithm performs well to estimate the optimal probability.

\begin{figure}
    \centering
    \includegraphics[width=3.2in]{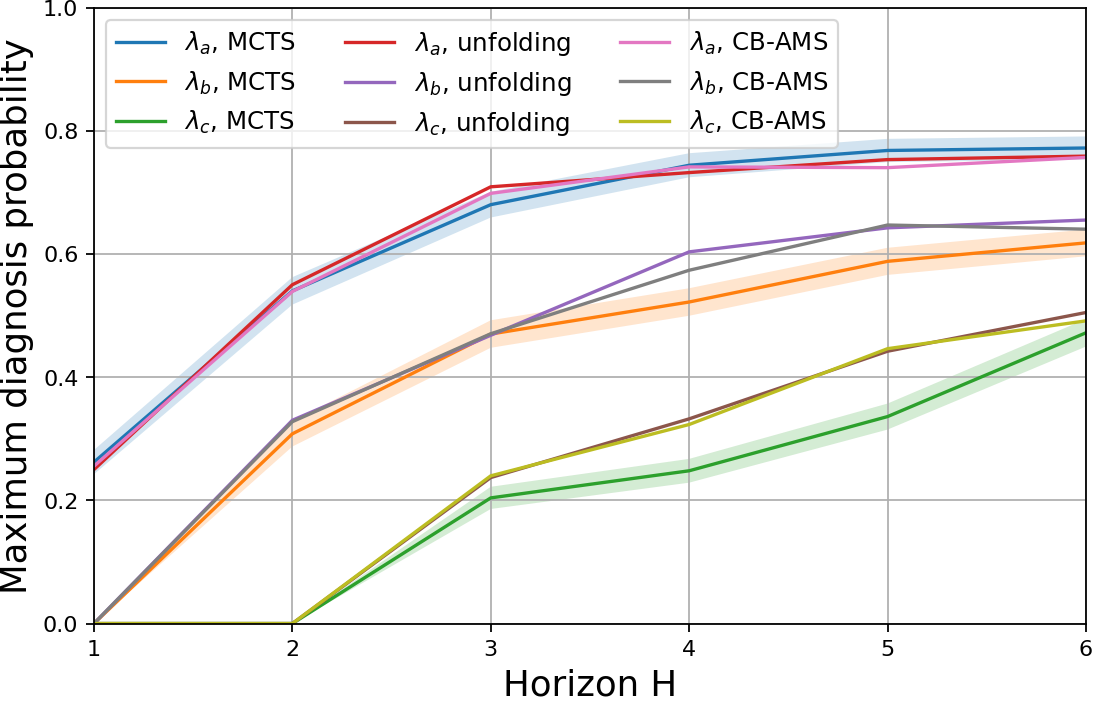}
    \caption{Maximum probability  for (\ref{equation:safe reachability}) with $B_{safe}$ defined in (\ref{equation:bsafe}). Error bars denote a 90\% confidence interval. $\lambda_a:\{\lambda_1=0.8,\lambda_2=0.7\}$, $\lambda_b:\{\lambda_1=0.9,\lambda_2=0.8\}$, $\lambda_c:\{\lambda_1=0.95,\lambda_2=0.9\}$.}
    \label{fig:prob1}
\end{figure}

We summarize the run times for both algorithms with regards to specification (\ref{equation:safe reachability}) in Table \ref{table:2}. For Algorithm \ref{alg:unfold}, the run time consists of the time to get the belief MDP model $\mathcal{B}$ and compute the optimal probability. All the experiments were run on a laptop with 2.6GHz i7 Intel\textsuperscript{\textregistered} processor and 16GB memory.  For a small time horizon $H$, exact solution outperforms sampling in time consumption, as the number of the states in $\mathcal{B}$ is small. With a growing horizon $H$, a sampling based method is favorable since its run time increases much slower.

		
		
		

\begin{table}[h]
	\centering

	\caption{Run times for (\ref{equation:bsafe}) in seconds.}
 \begin{adjustbox}{width=\columnwidth}
	\begin{tabular}{|l|l|l|l|l|l|l|l|}
		\hline
		\multicolumn{2}{|l|}{Horizon $H$} & $1$ & $2$ & $3$ & $4$ & $5$ & $6$ \\ 
		\hline
	    \multirow{2}{*}{$\lambda_a$} 
	    & Unfold & 0.40 & 0.71 & 2.39 & 7.62 &30.34 & 229.89 \\\cline{2-8}
		& CB-AMS & 0.03& 0.15 & 0.66 & 2.12 & 5.76 & 12.76 \\ \cline{2-8}
		 & MCTS & 1.74 & 3.24 & 4.26 & 5.22 & 5.64 & 6.15  \\
		\hline
		\multirow{2}{*}{$\lambda_b$} 
		& Unfold & 0.01 &	0.92 &	3.04 &	12.31 &	94.13 &	1054.71 \\\cline{2-8}
		 & CB-AMS & 0.03 & 0.15& 0.80 & 4.0 & 13.52 & 42.31 \\ \cline{2-8}
		 & MCTS & 1.60 & 3.71 & 5.28 & 6.39 & 7.25 & 7.33   \\
		\hline
		\multirow{2}{*}{$\lambda_c$} 
		& Unfold & 0.01 & 0.12 & 4.35 & 17.76 & 180.36 & 2211.70 \\\cline{2-8}
		 & CB-AMS & 0.04 & 0.21 & 0.92 & 3.95 & 14.3681 & 46.26  \\ \cline{2-8}
		 & MCTS & 1.61 & 3.97 & 5.82 & 7.64 & 8.80 & 9.83   \\
		\hline
	\end{tabular}
 \end{adjustbox}
	\label{table:2}
\end{table}

\begin{figure}[t]
	\vspace{-3mm}
		\subfloat[\label{fig:privacy mdp}]{
	\begin{tikzpicture}[shorten >=1pt,node distance=2cm,on grid,auto, thick,scale=0.95, every node/.style={transform shape}]
	\node[state] (q_1)   {$s_1$};
	\node[state] (q_2) [below left = 2cm of q_1] {$s_2$};
	\node[state] (q_3) [below right =2cm of q_1] {$s_3$};
	
	\path[->]
	(q_1) edge [pos=0.5, loop above] node  {} (q_1)
	(q_1) edge [pos=0.5, bend right, above=0.5,sloped] node {} (q_2)
	(q_1) edge [pos=0.5, bend left, above=0.5,sloped] node {} (q_3)
	
	(q_2) edge [pos=0.5, loop below, left=0.1] node  {} (q_2)
	(q_2) edge [pos=0.5, bend right, above=0.5,sloped] node {} (q_1)
	(q_2) edge [pos=0.5, above=0.5] node {} (q_3)
	
	(q_3) edge [pos=0.5, loop below, right=0.1] node {} (q_3)
	(q_3) edge [pos=0.5, bend left, above=0.5,sloped] node {} (q_1)
	(q_3) edge [pos=0.5, bend left, above=0.5,sloped] node {} (q_2)
	;
	\end{tikzpicture}
		}
		\subfloat[\label{fig:ads}]{
\centering
	\begin{tikzpicture}[scale=0.63]
	\draw[very thick,->] (0,0) -- (5.2,0) node[above] {$b(\mathcal{C}_1=1)$};
	\draw[very thick,->] (0,0) -- (0,5.2) node[right] {$b(\mathcal{C}_2=1)$};
	
	\draw (0,0.05) -- (0,-0.05) node[below] {\footnotesize 0};
	\draw (1,0.05) -- (1,-0.05) node[below] {\footnotesize 0.2};
	\draw (2,0.05) -- (2,-0.05) node[below] {\footnotesize 0.4};
	\draw (3,0.05) -- (3,-0.05) node[below] {\footnotesize 0.6};
	\draw (4,0.05) -- (4,-0.05) node[below] {\footnotesize 0.8};
	\draw (5,0.05) -- (5,-0.05) node[below] {\footnotesize1};
	
	\draw (-0.05,1) -- (0.05,1) node[left] {\footnotesize 0.2};
	\draw (-0.05,2) -- (0.05,2) node[left] {\footnotesize 0.4};
	\draw (-0.05,3) -- (0.05,3) node[left] {\footnotesize 0.6};
	\draw (-0.05,4) -- (0.05,4) node[left] {\footnotesize 0.8};
	\draw (-0.05,5) -- (0.05,5) node[left] {\footnotesize 1};
	\fill[blue!50!cyan,opacity=0.3] (0,0) -- (1,0) -- (1,5) -- (0,5) -- cycle;
	\fill[blue!50!cyan,opacity=0.3] (4,0) -- (5,0) -- (5,5) -- (4,5) -- cycle;
	\fill[red!50!cyan,opacity=0.3] (0,0) -- (5,0) -- (5,1.25) -- (0,1.25) -- cycle;
	\fill[red!50!cyan,opacity=0.3] (0,3.75) -- (5,3.75) -- (5,5) -- (0,5) -- cycle;
	\draw[dashed] (0,1.25) -- (1,1.25) -- (1,3.75) -- (0,3.75) -- (0,1);
	\draw[dashed]  (4,1.25) -- (5,1.25) -- (5,3.75) -- (4,3.75) -- (4,1);
	\draw[black,fill=black] (2.5,2.5)  circle (.5ex);
	\draw[very thick,->] (2.5,2.5) -- (3,2.5) -- (2.3,3.2) -- (2,2) -- (1.5,2.8) -- (1.2,2.3) -- (0.8,3);
	\draw[very thick,->,dashed] (2.5,2.5) -- (2.3,2.8) -- (2,1.5) -- (3.7,2) -- (3.8,2.7) --(3.9,3.2)-- (4.2,4);
	\end{tikzpicture}
		}
	\caption{(a) shows the feasible state transitions in the MDPs. (b) illustrates the belief space in the interactive advertising example. The area surrounded by dashed lines is $G$. The solid and dashed trajectories represent  successful and failed attempts to classification, respectively.}
\end{figure}

\subsection{Intruder Classification}
In automated surveillance applications \cite{bharadwaj2017synthesis}, mobile sensors or robots are deployed to monitor the intruder of potential threats.  For example, the work in \cite{bharadwaj2017synthesis} synthesizes controllers for mobile sensors with quantitative surveillance requirements. However, it is often the case that detected targets are not threats to start with, such as when small animals or disturbances set off alarms.  it is often necessary to determine whether a detected target is a potential threat before deploying security resources for further intervention. In these cases, we want to monitor the situation and determine whether a detected target is a threat before we deploy the mobile sensors to perform active surveillance. We assume we can always passively observe the target's location, which is generally possible through radar or some other static sensors in the environment.

\begin{figure}
\centering
\subfloat{
\includegraphics[scale=0.25]{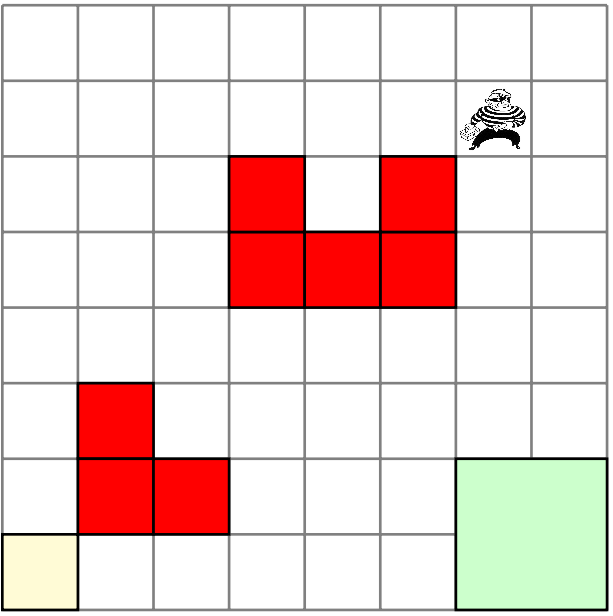}}
\subfloat{
 \includegraphics[scale=0.25]{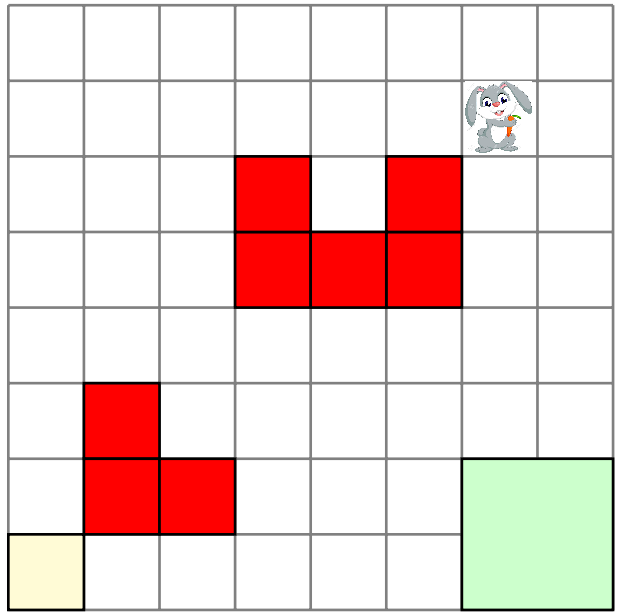}}
		\caption{Gridworld with two potential classes of targets - hostile (left) and safe (right). Green cells are sensitive areas. Red cells are obstacles such as buildings. The yellow cell is a hiding place for the hostile intruder. } \label{fig:grid}
\end{figure}
An example of a $8\times 8$ gridworld is shown in Figure~\ref{fig:grid}. The target is not allowed to reach the green zone. The target is assumed to be either  a hostile human intruder (Class 1) or an animal (Class 2) that has no real threat. Therefore, like the first example, there is only one attribute in $\mathcal{C}$. The behaviors of the hostile and safe intruders are characterized by two MDPs $\mathcal{M}_1$ and $\mathcal{M}_2$, respectively. The state  in the MDPs refers to the target's location in the gridworld, which is  observable through radar or some other static sensors in the environment.  

\begin{figure*}[ht]
	\centering\includegraphics[scale=0.42]{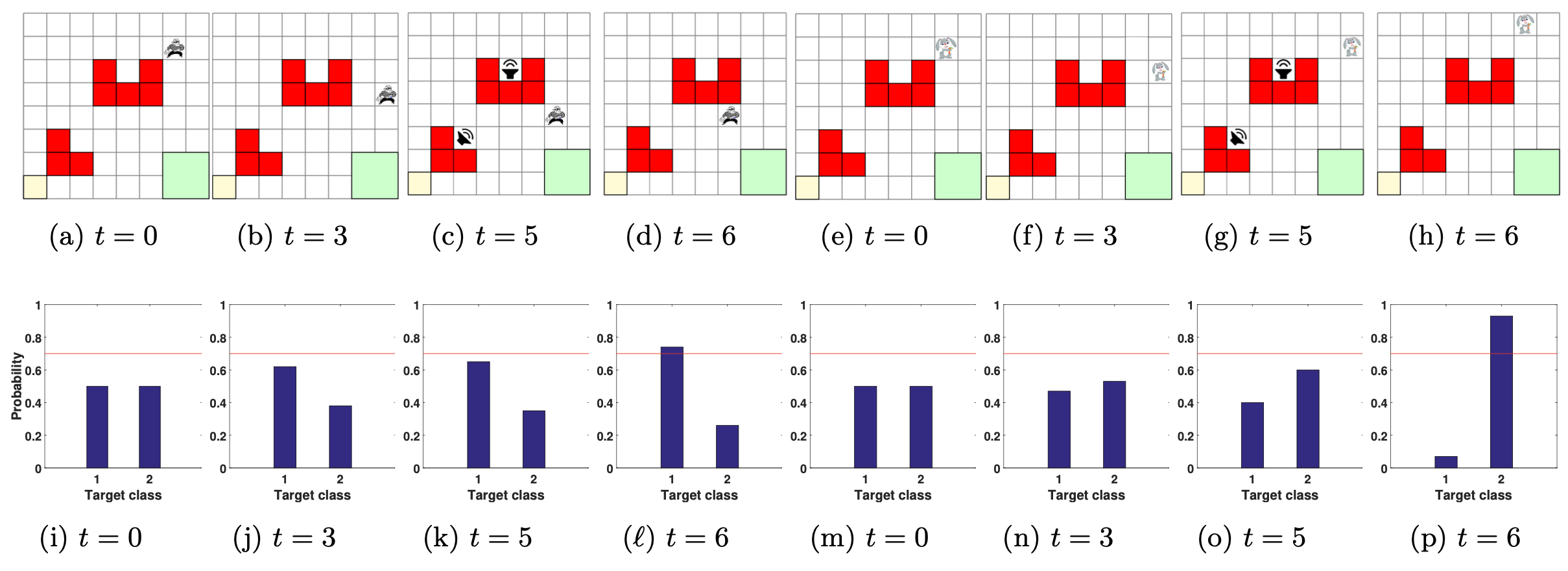}

	\caption{Simulation of the classification problem in a gridworld environment with both target classes. Figures 7a-7d and7i-7l 
 correspond to a simulation with a hostile target. Figures7e-7h and7m-7p correspond to a simulation with a safe target.}
	\label{fig:gridresults}
\end{figure*}
At each time step, the target randomly moves to one of its neighboring cells. For example, if the MDP model of class 1 assumes that the target will take action South from the current state, we assume there is a small probability of the target taking one of the actions North, East, or West instead, parametrized by  a randomness parameter $0 \leq \gamma \leq 1$. If $\gamma = 0$, there is no uncertainty  and if $\gamma = 1$, we know nothing about the target's behavior which means it can take all actions with equal probability. The actions available for the automated surveillance are $a_1$ for passive observation and $a_2$ for alarm through  loudspeakers.

If $a_1$ is chosen, a human intruder will attempt to move to the sensitive area (the green region) as denoted by $S_{green}$. When $a_2$ is executed, the animal will be startled and  move in all directions with equal probability, while the human intruder will tend to move towards the yellow region  in Figure \ref{fig:grid} ostensibly to hide. The human moves randomly  but generally heads to  $S_{green}$ or yellow region for action $a_1$ or $a_2$. The randomness is to capture different human preferences and  human inherent decision uncertainty. The costs for $a_1$ and $a_2$ at each state are $1$ and $3$, respectively.

The corresponding $B_{safe}$ can be defined as
$
    B_{safe}=\{b_s|s\notin S_{green}\}. 
$
The classification decision is  made  if one of the following  is satisfied:
$
    b_s(1)\geq 0.9\text{ or } b_s(2)\geq 0.9,
$
with the initial belief $\hat{b}_{\hat{s}}=(0.5,0.5)$, step bound  $H=6$ and cost bound $D=8$, where $\hat{s}$ is the initial state that intruders at as seen in Figure \ref{fig:grid}.




Figure~\ref{fig:gridresults} illustrates two instances of the classification process, where the exact solution is used to compute the optimal classification strategy. Suppose the target is a hostile intruder, Figures~7a- 7d illustrate a run of the human movement, where action $a_1$ is executed at $t=0,3,6$ and  action $a_2$ is executed at $t=5$. Figure~7i to -7l 
depicts the corresponding beliefs. In the scenario shown in Figure 7c, the target is near $S_{safe}$ and the corresponding belief is  as seen in Figure~7k which favors class $1$. The optimal action at the time instance $t=5$ is to sound the alarm. Then at $t=6$, it is observed that the target moves towards the yellow cell and the belief of class $1$ exceeds the threshold. This is where the classification terminates and a human operator will be alerted. Figures~7e to 7h and 7m to 7p shows the classification with a safe target (an animal) where similar behavior can be observed. After the alarm is used in $t=5$ in Figure 7o, the rabbit runs to the top right corner as seen in Figure 7p, a very unlikely move for a hostile intruder. As a result, the belief for class $2$ exceeds the threshold and the classification process terminates. In both simulations, the final costs are $7$.   In some cases, it is possible to classify with only passive observations. This situation usually occurs when $\gamma$ is small which means the uncertainties in candidate models are relatively low. However, as $\gamma$ is increased, more observations are needed, and purely relying on passive observations may not be possible if we need to classify the target before it reaches the green region.

We evaluate the performance of Algorithm \ref{alg:unfold} (unfolding), Algorithm \ref{alg:ams} (AMS) and Algorithm \ref{alg:mcts} (MCTS) with regard to specification (5) in the intruder environment with $\gamma=0.1$. For Algorithm \ref{alg:unfold}, the run time consists of the time to get the belief MDP model $\mathcal{B}$ and compute the optimal probability. Algorithm \ref{alg:unfold} takes 6.01 seconds total and computes $\mu^* = 0.802$.  Algorithm \ref{alg:ams} takes 0.40 seconds with an estimated maximal probability of $0.79$. Algorithm \ref{alg:mcts} takes 0.074 seconds with an estimated maximal probability of $0.79$ (averaged over 100 samples).

\subsection{Wildlife Classification}

Unmanned aerial vehicles (UAVs) equipped with cameras have been shown to be effective tools for carrying out automated wildlife detection through low-disturbance aerial surveys \cite{gonzalez2016wildlife,lopez2019drones}. Traditionally, these methods are divided into two phases: 1) a data acquisition phase where the UAV follows a predetermined flight path to collect images and 2) a classification phase where the data is analyzed to detect and count wildlife \cite{gonzalez2016wildlife}. In this example, we demonstrate how this methodology could be extended to an active classification setting where online classification can help inform the data collection.

The problem takes place on an $N \times N$ gridworld containing randomly placed obstacles. The gridworld contains a single UAV along with 2 unknown animals with randomized starting locations that have a 50\% prior probability of being a Kangaroo. At each time step, the UAV camera can move in one of the four the cardinal directions while the unknown animals move either vertically or horizontally back-and-forth across the gridworld.

After each time step, the UAV receives noisy observations of the type of each unknown animal based on the distance between the UAV and the animal. If the UAV's vision of the animal is obstructed by an obstacle, the observation is totally random. Otherwise, the probability that the UAV correctly classifies an animal at a distance of $D$ as either being a Kangaroo or not  is
 $   \max(0.95 - e^{-4.0/D}, 0.5)$
meaning that an observation has a maximum of 95\% chance of being correct and at a distance of over 5 units is totally random.

We only evaluate Algorithm \ref{alg:mcts} on the wildlife classification task, since the other approaches are intractable over the large state space. Since there are two unknown animals, each of which can be classified as an animal or not, there are four possible true underlying MDPs. Classification thresholds are 0.9 for each of the underlying possible MDPs. The whole belief space is considered safe and there is no cost constraint. We evaluate Algorithm \ref{alg:mcts} with $N=100$. Figure \ref{fig:wildlife_scale} shows the scalability and Figure \ref{fig:wildlife_success} shows the success probability of Algorithm \ref{alg:mcts} with respect to varying sizes of gridworlds and a growing horizon. Both plots are obtained by sampling 100 random seeds.

\begin{figure}[ht]
	\subfloat[Average computation time. \label{fig:wildlife_scale}]{
		\includegraphics[scale=0.5]{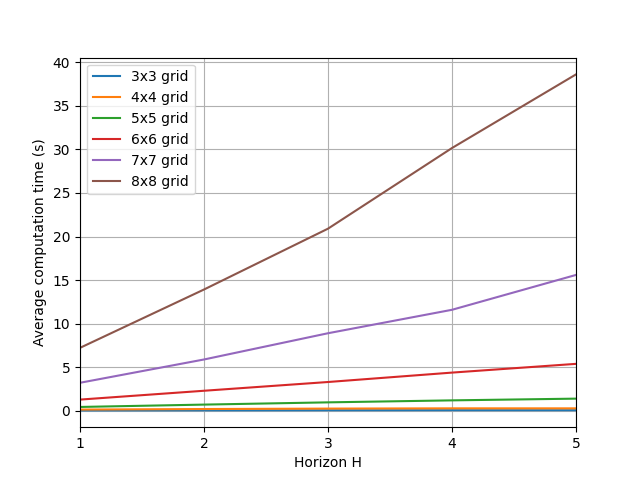}
	}
	\hfill
	\subfloat[Estimated likelihood of success. \label{fig:wildlife_success}]{
		\includegraphics[scale=0.5]{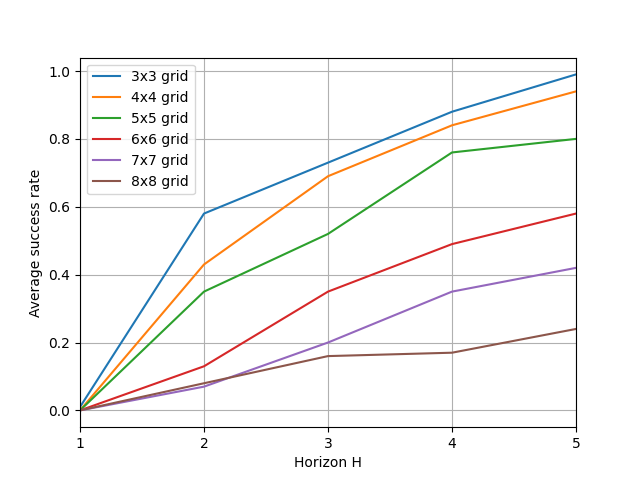}
	}
	\caption{The average computation time \ref{fig:wildlife_scale} and estimated likelihood of success \ref{fig:wildlife_success} of Algorithm \ref{alg:mcts} on increasing horizon and varying gridworld size for the wildlife classification task.}
\end{figure}

\section{Conclusion}\label{sec:Conclusion}
In this paper, we studied a cost-bounded active classification of certain attributes of dynamical systems belonging to a  finite set of MDPs. We utilized the HMMDP modeling framework and the objective was to actively select actions based on the current belief, accumulated cost, and time step, {such that the probability to reach a classification decision within a cost bound can be maximized while avoiding unsafe belief states}. To solve the problem, we proposed three approaches. The first one was an exact solver to obtain the unfolded belief MDP model considering the cost-bound, and then compute the optimal strategy. To mitigate the computation burden, the rest two approaches adaptively sample the actions to estimate the maximum probability offline and online, respectively. Three examples are given to show the application of our proposed approaches. 

For future work, it is of interest to study how the performance, in terms of the maximum probability to reach a classification decision without visiting unsafe belief regions, deteriorates in the approximate solution with the number of samples. Furthermore, the POMDP in this paper has a special structure where the underlying MDP, once selected, will not change. We would also like to further study how to leverage this fact to reduce the computation complexity.



\bibliographystyle{elsarticle-harv}
\bibliography{ref}
\end{document}